\newcommand{\be}{\begin{equation}}
\newcommand{\ee}{\end{equation}}
\newcommand{\bea}{\begin{eqnarray}}
\newcommand{\eea}{\end{eqnarray}}
\def\squareforqed{\hbox{\rlap{$\sqcap$}$\sqcup$}}
\def\qed{\ifmmode\squareforqed\else{\unskip\nobreak\hfil
\penalty50\hskip1em\null\nobreak\hfil\squareforqed
\parfillskip=0pt\finalhyphendemerits=0\endgraf}\fi}
\def\endenv{\ifmmode\;\else{\unskip\nobreak\hfil
\penalty50\hskip1em\null\nobreak\hfil\;
\parfillskip=0pt\finalhyphendemerits=0\endgraf}\fi}
\newcommand{\tr}{\text{Tr}}
\newcommand{\I}{\mathbbm{1}}
\newtheorem*{rep@theorem}{\rep@title}
\newcommand{\newreptheorem}[2]{%
\newenvironment{rep#1}[1]{%
 \def\rep@title{#2 \ref{##1}}%
 \begin{rep@theorem}}%
 {\end{rep@theorem}}}
\newtheorem{thm}{Theorem}
\newtheorem{lemma}{Lemma}
\begin{document}

\title{Limitation of maximally entangled probes for single-shot distinguishability of unitaries}

\author{Satyaki Manna}
\author{Anandamay Das Bhowmik}

\affiliation{School of Physics, Indian Institute of Science Education and Research Thiruvananthapuram, Kerala 695551, India}
\author{Debashis Saha}
\affiliation{School of Physics, Indian Institute of Science Education and Research Thiruvananthapuram, Kerala 695551, India}
\affiliation{Department of Physics, School of Basic Sciences, Indian Institute of Technology Bhubaneswar, Bhubaneswar, Odisha 752050, India}


\begin{abstract}
There have been many instances where the maximally entangled state as a probe acts better than the product and the non-maximally entangled states in the task of distinguishing quantum channels. We provide a proof that for single-shot discrimination of two unitary channels, entangled and product states are operationally equivalent. However, we identify pairs of unitaries that are perfectly distinguishable using a non-maximally entangled state, but not with a maximally entangled one. This contrast becomes more pronounced when the number of unitaries exceeds two. In every \textit{dimension $\geqslant 3$}, we show that there exists a class of unitaries that are indistinguishable under maximally entangled probes, yet perfectly distinguishable using product or non-maximally entangled inputs.  Another interesting set of unitaries in every \textit{dimension $\geqslant 3$} has been presented where only non-maximally entangled state acts as the successful probe, while product states and maximally entangled states cannot. 
\end{abstract}

\maketitle


\section{Introduction} \label{SEC I} 
Entanglement is the most unique and striking feature of quantum mechanics. Acting as a nonlocal resource, entanglement has been established as the predominant cornerstone of quantum information theory.  Most useful and well-acquainted applications of entanglement have been executed by utilizing the maximally entangled state.  Maximally entangled states play the most important role as the resource in several informational tasks such as quantum teleportation \cite{PhysRevLett.70.1895,Horodecki_1996,PhysRevLett.86.1370}, super-dense coding \cite{PhysRevLett.69.2881}, quantum cryptography \cite{PhysRevLett.67.661}, quantum error correction \cite{PhysRevA.101.042305}, quantum metrology \cite{Gerry:06} and quantum communication \cite{gisin2007quantum,PhysRevA.81.042326,PhysRevApplied.21.044010}.  Moreover, it is an important foundation of quantum computation and technology \cite{PhysRevA.100.022342,stasino2025implementation}. There has been another avenue of research exploiting the entanglement started when Kitaev \cite{AYuKitaev_1997} showed entanglement assisted probe-ancilla system sometimes gives the advantage in discriminating quantum channels. Over the years, the advantage of entanglement with respect to product probe system in the task of distinguishing quantum channels has been studied extensively \cite{PhysRevA.71.062340, PhysRevA.71.062310, PhysRevA.72.014305, PhysRevLett.102.250501, HORODECKI19961, PhysRevA.98.042103, npj, PhysRevA.111.022221, PhysRevA.81.032339, PhysRevLett.103.210501, PhysRevA.82.032302, watrous2008distinguishingquantumoperationshaving, Datta_2021}  but most of the literature did not compare the amount of entanglement needed for the necessary task.  At the moment, we are only interested in the discrimination of unitary channels \cite{GMauroD’Ariano_2002,PhysRevLett.87.177901, PhysRevA.64.050302,Ziman10022010}. Most of the works in 
this direction is being carried out for any two unitaries or a set of continuous unitaries.  For the estimation of unknown unitary from a continuous set of unitaries, maximally entangled state acts as the better or at par as the input state in reference to non-maximally entangled state \cite{PhysRevA.64.050302}. There are some works which consider multi-shot scenario of discrimination of quantum operations \cite{PhysRevLett.103.210501, PhysRevLett.87.177901}. We mainly address these questions concerning the hierarchy of the probing state, i.e.,  the maximally entangled state vs product state and maximally entangled state vs non-maximally entangled state and product state vs non-maximally entangled state in the distinguishability of finite number of unitaries in single shot paradigm. 

Distinguishability of different physical processes has been an important exploration to understand the intricacies of quantum theory and that enables us of a better understanding of the physical world. From the commence of this kind of research, distinguishability of quantum states \cite{Hellstrom, PhysRevA.93.062112, PhysRevA.70.022302, PhysRevA.88.052313} has been the most expansive area of investigation. Subsequently, the research on the distinguishability of quantum channels \cite{PhysRevLett.102.250501, PhysRevLett.87.177901, PhysRevA.71.062310, PhysRevA.72.014305, PhysRevA.64.050302,Ziman10022010, npj, PhysRevA.98.042103, PhysRevLett.103.210501, PhysRevA.81.032339, watrous2008distinguishingquantumoperationshaving} is now a very enriching and comprehensive field though it has more complex structure with respect to the distinguishability of states. In quantum foundation, this notion is extremely useful to comment on the reality of quantum states \cite{PhysRevLett.112.160404,PhysRevLett.112.250403,Chaturvedi2020quantum,PhysRevLett.113.020409, bhowmik2022interpretationquantumindistinguishability, ray2024epistemicmodelexplainantidistinguishability,chaturvedi2021}. Beyond the foundational insights, this quantity has several applications in the field of quantum information, quantum cryptography and quantum advantageous privacy preserving communication tasks \cite{PhysRevResearch.6.043269,PhysRevLett.115.030504,2025limits}.

In this paper, we formulate the problem of distinguishability of some a priori known set of unitary operations sampled from a probability distribution. In the single shot scenario, this problem of distinguishability can be classified into two types according to the initial input states, product or entangled. We have shown that the distinguishability of the unitaries ultimately reduces to the distinguishability of different evolved states depending on the probing state. Then for entanglement-assisted discrimination, we  show that all the maximally entangled states are equivalent in the task of distinguishing two unitaries.  Our study reveals that if any product input state provides the perfect distinguishability for two unitaries, then there exists at least one entangled state which can do the task at par. Then we check the hierarchy between the entangled states as the probe. We construct a very straight-forward necessary and sufficient criterion for which two unitaries are distinguishable with non-maximally entangled state but not distinguishable with maximally entangled state.  This phenomenon is not possible for two qubit unitaries. From this fact, we found an implication where any set of distinguishable qubit unitaries can be distinguishable with same probing state, i.e., maximally entangled state.   Then we produce our main result by presenting a set of $d$-dimensional $d$ unitaries which are distinguishable with product state  and non-maximally entangled state  but not distinguishable with any maximally entangled state.  Thereafter we demonstrate another set of $d$-dimensional $2d$ unitaries which are distinguishable with non-maximally entangled state but indistinguishable with both product and maximally entangled state. 

The paper is constructed as follows. In the next section, we present a general formulation of distinguishability of unitary operations. Then in the section \ref{result}, we exhibit our main results regarding the distinguishability of two unitaries and consequently, distinguishability of more than two unitaries. In conclusion, we
summarize the key findings and discuss several open
problems and potential future research directions.

\section{Distinguishability of Unitary Operations}
Unitary operator $(U)$ is a linear operator $U:H\rightarrow H$ on a Hilbert space $H$ that satisfies $U^\dagger U=UU^\dagger=\I$.

We consider a priori known set of $n$ unitaries $\{U_x\}_x$ acting on a $d$-dimensional quantum state and $x\in\{1,\cdots,n\}$. We assume that all the unitaries are sampled from a probability distribution $\{p_x\}_x$, i.e., $p_x>0$ and $\sum_x p_x=1$. To distinguish the unitaries, the unitary device is fed with a known quantum state, product or entangled and the device carries out one of $n$ unitary operations. After this process, the device gives a evolved state as the output. Therefore, one can perform any measurement on the evolved state and this measurement can be optimized such that the distinguishability of these evolved states will be maximum. Let us describe this optimal measurement by a set of POVM elements $\{N_{b}\}_b$, where $b\in\{1,\cdots,n\}$ and $\sum_b N_{b}=\I$. The protocol is successful in distinguishing the unitaries if $b$ is same as $x$. Any classical post processing of outcome $b$ can be included in the measurement $\{N_{b}\}_b$. Depending on the initial probing state, we can formulate two situations which are described in details in the next subsection.
\subsection{With Product System}
At first, the unitary device is given a product quantum state $\rho_A\otimes\rho_B$. After the device applying any of the unitaries from the set $\{U_x\}_x$ on part $A$, the output state will be $(U_x\rho_A U_x^\dagger)\otimes\rho_B$. After performing the measurement $\{N_b\}_b$, the distinguishability of the set of unitaries becomes the distinguishability of the set of evolved states. So distinguishability in this scenario, denoted by $\mathcal{D}_P$, is defined as,
\bea\label{D_S}
&&\mathcal{D}_P\left[\{U_x\}_x,\{p_x\}_x\right]\nonumber\\
&=& \max_{\rho_A} \sum_x p_xp(b=x|x)\nonumber\\
&=& \max_{{\rho_A},\{N_b\}}\sum_x p_x\tr((U_x\rho_A U_x^\dagger \otimes\rho_B)N_{b=x})\nonumber\\
&=& \max_{\rho_A} \mathcal{DS}[\{U_x\rho_A U_x^\dagger\}_x,\{p_x\}_x].\nonumber
\eea
$\mathcal{DS}[\{\rho_k\}_k,\{q_k\}_k]$ denotes the distinguishability of the set of states $\{\rho_k\}_k$, sampled from the probability distribution $\{q_k\}_k$. It is important to note that \eqref{D_S} does not depend on $\rho_B$.
This expression reduced as following by Hellstrom's formula \cite{Hellstrom} when we want to know the distinguishability of two unitaries $U_1$ and $U_2$:
\bea
&&\mathcal{D}_P\left[\{U_x\}_{x=1}^2,\{p_x\}_{x=1}^2\right]\nonumber\\
&=& \max_{\rho_A} \mathcal{DS}[\{U_x\rho_A U_x^\dagger\}_x,\{p_x\}_x].\nonumber\\
&=& \max_{\rho_A}\frac12\left(1+||p_1 U_1\rho_A U_1^\dagger-p_2 U_2\rho_A U_2^\dagger||\right).
\eea
If $U_1$ and $U_2$ are sampled from equal probability distribution, i.e., $p_1=p_2=1/2$, the above expression reduces to, 
\bea\label{D_S1}
&&\mathcal{D}_P\left[\{U_x\}_{x=1}^2,\{1/2,1/2\}\right]\nonumber\\
&=&\frac12+\frac14\max_{\rho_A}|| U_1\rho_A U_1^\dagger- U_2\rho_A U_2^\dagger||,
\eea
where $||.||$ (trace norm) denotes the sum of the absolute values of the eigenvalues. By convexity of trace norm, the maximum value of $|| U_1\rho_A U_1^\dagger- U_2\rho_A U_2^\dagger||$ is achieved when $\rho_A$ is the pure input state \cite{10.5555/3240076}. So for pure input state, i.e., $\rho_A=\ket{\psi}\bra{\psi}$, \eqref{D_S1} takes the form after implementing Hellstrom's formula as following:
\bea\label{D_S11}
&&\mathcal{D}_P=\max_{\ket{\psi}}\frac12\left[1+\sqrt{1-|\langle \psi| U_1^\dagger U_2|\psi|^2}\right].
\eea

$\mathcal{D}_P$ will be achieved when $|\bra{\psi}U_1^\dagger U_2\ket{\psi}|$ is minimum. As $U_1^\dagger U_2$ is an unitary, we can write the spectral decomposition of this operator as, $U_1^\dagger U_2=\sum_{j=1}^d e^{\mathbbm{i}\theta_j}\ket{\psi_j}\bra{\psi_j}$, where $e^{\mathbbm{i}\theta_j}$ are the eigenvalues of $U_1^\dagger U_2$ and $\ket{\psi_j}$ is the eigenvector corresponding to $j$'th eigenvalue. Therefore $\ket{\psi}$ can be decomposed into the linear combination of the eigenstates $\ket{\psi_j}$, i.e., $\ket{\psi}=\sum_{j=1}^d\alpha_j\ket{\psi_j} $. So, 
\bea\label{min_con}
\min_{\ket{\psi}}|\bra{\psi}U_1^\dagger U_2\ket{\psi}|^2&=& \min_{\ket{\psi}}|\bra{\psi}\sum_{j=1}^d e^{\mathbbm{i}\theta_j}\alpha_j\ket{\psi_j}|^2\nonumber\\
&=& \min_{\ket{\psi}}|\sum_{j=1}^d|\alpha_j|^2 e^{\mathbbm{i}\theta_j}|^2\nonumber\\
&=& \min|con\{e^{\mathbbm{i}\theta_j}\}|^2,
\eea
where $con\{e^{\mathbbm{i}\theta_j}\}$ denotes the set of complex numbers that can be written as the convex combinations of $\{e^{\mathbbm{i}\theta_j}\}$ and $\min|.|$ denotes the minimum norm over all those complex numbers (see figure \ref{figx}). Substituting \eqref{min_con} into \eqref{D_S11}, we get an elegant form,
\be
\mathcal{D}_P=\frac12\left[1+\sqrt{1-\min|con\{e^{\mathbbm{i}\theta_j}\}|^2}\right].
\ee
\subsection{With Entangled system}
Now, the unitary device is fed with a $d\otimes d'$ entangled state $\rho_{AB}$ and the device applies one of the unitaries from the set $\{U_x\}_x$ on the part $A$ of the entangled state. The evolved state will be $(U_x\otimes\I)\rho_{AB}(U_x^\dagger\otimes\I)$. Then one can perform the optimal measurement of dimension $dd'$. Similarly, the distinguishability of the unitaries in this scenario, denoted as $\mathcal{D}_E$, can be written as,
\bea\label{D_E}
&&\mathcal{D}_E\left[\{U_x\}_x,\{p_x\}_x\right]\nonumber\\
&=& \max_{\rho_{AB}} \sum_x p_xp(b=x|x)\nonumber\\
&=& \max_{\rho_{AB},\{N_b\}} \sum_x p_x \tr[(U_x\otimes\I)\rho_{AB}(U_x^\dagger\otimes\I)N_{b=x}]\nonumber\\
&=& max_{\rho_{AB}} \mathcal{DS}[\{(U_x\otimes\I)\rho_{AB}(U_x^\dagger\otimes\I)\}_x,\{p_x\}_x].
\eea

Now at this point, it is a natural and important question to be raised regarding the dimension of the system of part $B$. We prove that it suffices to take the ancillary system of dimension $d'=d$. The argument is given below. 
\begin{lemma}\label{suff_prob}
    For distinguishability in entanglement assisted scenario, the sufficient initial entangled state is of $d\otimes d$ dimension for $d$ dimensional unitaries.
\end{lemma}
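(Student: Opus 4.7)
The plan is to reduce any entanglement-assisted discrimination with an arbitrarily large ancilla to one using an ancilla of dimension exactly $d$, via the Schmidt decomposition of the optimal pure input.

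First, by convexity of the trace distance, the objective in \eqref{D_E} is convex in $\rho_{AB}$, so the maximum is attained on a pure state $\ket{\psi}_{AB}\in\mathbb{C}^d\otimes\mathbb{C}^{d'}$; this is exactly the reduction to pure inputs used just before \eqref{D_S11}.

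Next, I would apply the Schmidt decomposition $\ket{\psi}_{AB}=\sum_{i=1}^{r}\sqrt{\lambda_i}\,\ket{e_i}_A\ket{f_i}_B$, with $\{\ket{e_i}\}\subset\mathbb{C}^d$ and $\{\ket{f_i}\}\subset\mathbb{C}^{d'}$ orthonormal and Schmidt rank $r\leq d$. Because $U_x\otimes\I$ leaves $B$ untouched, every evolved state $(U_x\otimes\I)\ket{\psi}$ lies in the fixed subspace $\mathbb{C}^d\otimes\mathrm{span}\{\ket{f_1},\ldots,\ket{f_r}\}$, whose $B$-factor has dimension at most $d$.

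Finally, I would transport the problem into $\mathbb{C}^d\otimes\mathbb{C}^d$. Extending $\{\ket{f_i}\}_{i=1}^r$ to an orthonormal set $\{\ket{f_i}\}_{i=1}^d\subset\mathbb{C}^{d'}$, let $V:\mathbb{C}^d\to\mathbb{C}^{d'}$ be the isometry $V\ket{i}=\ket{f_i}$, so that $\ket{\psi}=(\I_A\otimes V)\ket{\psi'}$ with $\ket{\psi'}=\sum_i\sqrt{\lambda_i}\,\ket{e_i}_A\ket{i}_{B'}\in\mathbb{C}^d\otimes\mathbb{C}^d$. For any POVM $\{N_b\}$ on the larger space I define the pullback $N'_b=(\I_A\otimes V^\dagger)N_b(\I_A\otimes V)$; positivity is immediate and $V^\dagger V=\I_d$ yields $\sum_b N'_b=\I$, so $\{N'_b\}$ is a valid POVM on $\mathbb{C}^d\otimes\mathbb{C}^d$. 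Since $U_x$ and $V$ act on disjoint factors, commuting them past each other and invoking cyclicity of the trace gives $\tr[(U_x\otimes\I)\ket{\psi}\bra{\psi}(U_x^\dagger\otimes\I)N_b]=\tr[(U_x\otimes\I)\ket{\psi'}\bra{\psi'}(U_x^\dagger\otimes\I)N'_b]$, so the success probability is preserved term by term and the optimum in \eqref{D_E} is unchanged when $d'$ is restricted to $d$.

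The only delicate point I anticipate is ensuring $V$ is a genuine isometry on all of $\mathbb{C}^d$, not merely a partial one, when $r<d$; this is handled by padding the Schmidt basis on $B$ so that $V^\dagger V=\I_d$, after which the pullback POVM correctly sums to the identity and the rest of the argument reduces to a direct calculation.
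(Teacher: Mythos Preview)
Your proof is correct and follows the same approach as the paper: reduce to a pure input and invoke the Schmidt decomposition to see that at most $d$ Schmidt vectors on the ancilla side ever play a role. The paper's argument is terser---it simply declares the extra $(d'-d)$ ancilla basis vectors ``redundant'' without constructing the isometry or the pullback POVM---whereas you supply those details explicitly, but the underlying idea is identical.
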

\begin{proof}
    We can take the best possible entangled probing state $\ket{\psi}_{AB}\in\mathcal{C}^d\otimes\mathcal{C}^{d'}$. By Schmidt decomposition, we can always write the state $\ket{\psi}_{AB}=\sum_{l=1}^{d} C_l\ket{\eta_l}\ket{\chi_l}$, where $d'\geqslant d$. 
     From the structure of Schmidt decomposition, we can see extra $(d'-d)$ number of bases are redundant. The other case, when $d'<d$, the input state can be written as $\ket{\psi'}_{AB}=\sum_{l=1}^{d'} C_l\ket{\eta_l}\ket{\chi_l}$. So this state is not considerable because we can not write the state with all the bases of $\ket{\eta_l}$. To make this happen, we need to take at least $d=d'$.
\end{proof}
 Now we are moving to analyze some special kind of probing state to distinguish two unitaries in this entanglement assisted scenario. Our first venture to assess the hierarchy of different maximally entangled states as the probe.
 \begin{thm}\label{th4}
    The distinguishability of any two qudit unitaries sampled from uniform distribution in the entanglement-assisted scenario is invariant for any initial maximally entangled state.
\end{thm}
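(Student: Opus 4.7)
The plan is to reduce the quantity $\mathcal{D}_E$ for two equiprobable unitaries to a simple trace expression. By Lemma \ref{suff_prob}, I may restrict attention to $d \otimes d$ probes, and by the same convexity argument already used for the product case (together with the fact that $\mathcal{D}_E$ equals a maximum of a convex function), the optimal probe can be taken pure. Applying Hellstrom's formula to the two output pure states $(U_x \otimes \I)\ket{\psi}_{AB}$ with $p_1 = p_2 = 1/2$ gives
\begin{equation}
\mathcal{D}_E = \frac{1}{2}\left[1 + \sqrt{1 - |\bra{\psi}_{AB}(U_1^\dagger U_2 \otimes \I)\ket{\psi}_{AB}|^2}\right].
\end{equation}
So the theorem reduces to showing that the quantity $|\bra{\psi}(U_1^\dagger U_2 \otimes \I)\ket{\psi}|^2$ takes the same value for every maximally entangled $\ket{\psi}_{AB}$.

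Next, I would parametrize an arbitrary maximally entangled state. Fixing the canonical state $\ket{\Phi^+} = \frac{1}{\sqrt{d}}\sum_{i=1}^{d}\ket{i}\ket{i}$, every maximally entangled state in $\mathcal{C}^d \otimes \mathcal{C}^d$ can be written as $\ket{\psi}_{AB} = (A \otimes B)\ket{\Phi^+}$ for some unitaries $A, B$ acting on $\mathcal{C}^d$ (this follows from the Schmidt decomposition together with the fact that all Schmidt coefficients equal $1/\sqrt{d}$). I will then use the standard identity $\bra{\Phi^+}(M \otimes \I)\ket{\Phi^+} = \frac{1}{d}\operatorname{Tr}(M)$, valid for any operator $M$ on $\mathcal{C}^d$, which is an immediate consequence of the definition of $\ket{\Phi^+}$.

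Combining these, for $M = U_1^\dagger U_2$ I get
\begin{equation}
\bra{\psi}(M \otimes \I)\ket{\psi} = \bra{\Phi^+}(A^\dagger M A \otimes B^\dagger B)\ket{\Phi^+} = \frac{1}{d}\operatorname{Tr}(A^\dagger M A),
\end{equation}
and cyclicity of the trace reduces this to $\frac{1}{d}\operatorname{Tr}(U_1^\dagger U_2)$, independent of the choice of local unitaries $A, B$. Substituting back yields
\begin{equation}
\mathcal{D}_E = \frac{1}{2}\left[1 + \sqrt{1 - \frac{1}{d^2}|\operatorname{Tr}(U_1^\dagger U_2)|^2}\right],
\end{equation}
a constant across all maximally entangled probes, which is the claim.

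There is no serious obstacle here; the entire argument is essentially a one-line computation once the two ingredients are in place. The only mild care needed is in justifying the restriction to pure maximally entangled probes (so that Hellstrom's two-state formula applies cleanly) and in noting that the expression $\bra{\psi}(M \otimes \I)\ket{\psi}$ depends only on the reduced state on $A$ together with the phase structure — both of which are fixed for maximally entangled inputs because the reduced state is $\I/d$ and the residual freedom is exactly a local unitary, which is absorbed by trace cyclicity.
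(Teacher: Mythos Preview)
Your proposal is correct and follows essentially the same route as the paper: apply Hellstrom's formula to reduce to the overlap $\bra{\psi}(U_1^\dagger U_2\otimes\I)\ket{\psi}$, evaluate this on the canonical maximally entangled state as $\frac{1}{d}\Tr(U_1^\dagger U_2)$, and then observe that the local-unitary freedom connecting all maximally entangled states disappears under the trace. The paper parametrizes the other maximally entangled states by a single local unitary $(U\otimes\I)\ket{\phi}$ and invokes basis independence of the trace, whereas you use the slightly more explicit $(A\otimes B)\ket{\Phi^+}$ together with trace cyclicity; these are the same argument in different clothing.
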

\begin{proof}
     Let us take the initial maximally entangled state as $\ket{\phi}=\frac{1}{\sqrt{d}}\sum_l\ket{ll}$.  From \eqref{D_E}, we can write,
    \bea\label{D_E_maximally_entangled}
   && \mathcal{D}_{ME}[\{U_1,U_2\},\{1/2,1/2\}]\nonumber\\
   &=&\mathcal{DS}[\{(U_1\otimes\I)\ket{\phi},(U_2\otimes\I)\ket{\phi}\},\{1/2,1/2\}]\nonumber\\
   &=&\frac12\left[1+\sqrt{1-|\bra{\phi}(U_1^\dagger\otimes\I)(U_2\otimes\I)\ket{\phi}|^2}\right]\nonumber\\
   &=& \frac12\left[1+\sqrt{1-\frac{1}{d^2}|\sum_{l=1}^d\bra{l}U_1^\dagger U_2\ket{l}|^2}\right]\nonumber\\
   &=& \frac12\left[1+\sqrt{1-\frac{1}{d^2}|\tr(U_1^\dagger U_2)|^2}\right].
    \eea
The third line comes from applying Hellstrom's formula \cite{Hellstrom}. All the maximally entangled states are connected via local unitaries. Now, we can start with another maximally entangled state which can be written as $\ket{\phi'}=(U\otimes\I)\ket{\phi}$.  As trace is basis independent property, $\tr(U_1^\dagger U_2)$ will be unchanged if $\ket{l}$ is changed to $U\ket{l}$. So \eqref{D_E_maximally_entangled} remains unchanged, and that completes the proof. 
\end{proof}  
Noting the fact that $\tr(U_1^\dagger U_2)$ is the sum of the eigenvalues of $U_1^\dagger U_2$, we can express the distinguishability of $U_1$ and $U_2$ with maximally entangled state $(\mathcal{D}_{ME})=\frac12\left[1+\sqrt{1-\frac{1}{d^2}|\sum_{j=1}^d e^{\mathbbm{i}\theta_j}|^2}\right]$. 

At figure \ref{figx}, we present a pictorial depiction of the position of the eigenvalues of $U_1^\dagger U_2$ in the Argand plane and then describe the distinguishability of $U_1$ and $U_2$ with product system probe ($\mathcal{D}_P$) and maximally entangled probe ($\mathcal{D}_{ME}$).

\begin{figure}[h!]
    \centering
    \includegraphics[width=0.7\linewidth]{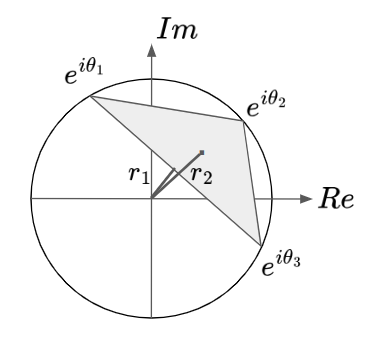}
    \caption{ An example is depicted, where the position of the eigenvalues of $U_1^\dagger U_2$ has been shown here in the complex plane and $U_1$ and $U_2$ are $3$-dimensional unitaries. $\mathcal{D}_P=\frac12(1+\sqrt{1-|r_1|^2})$ and $\mathcal{D}_{ME}=\frac12(1+\sqrt{1-|r_2|^2})$. In general, $\mathcal{D}_P\geqslant\mathcal{D}_{ME}$.}
    \label{figx}
\end{figure}

\section{Results}\label{result}

Before presenting our main claims of the disadvantageous property of  maximally  entangled states, we introduce some other results regarding the distinguishability of two unitaries. In this section, the unitaries we take at all the Theorems are sampled from a equal probability distribution. 
\begin{thm}\label{thm_2}
     For any two unitaries, if perfect distinguishability is achieved using product probing state, there exists at least one entangled probing state which gives same distinguishability.
\end{thm}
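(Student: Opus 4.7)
The plan is to reduce everything to the Hellstrom-type formula already derived for two unitaries and exploit the convex-hull characterization of perfect distinguishability obtained in equation~\eqref{min_con}. Let $V=U_1^\dagger U_2$ and write its spectral decomposition $V=\sum_{j=1}^d e^{\mathbbm{i}\theta_j}\ket{\psi_j}\bra{\psi_j}$. Perfect distinguishability with a product probe $\ket{\psi}=\sum_j\alpha_j\ket{\psi_j}$ means $\bra{\psi}V\ket{\psi}=\sum_j|\alpha_j|^2 e^{\mathbbm{i}\theta_j}=0$; equivalently, the origin lies in the convex hull of the eigenvalues $\{e^{\mathbbm{i}\theta_j}\}$, so there exist nonnegative weights $p_j=|\alpha_j|^2$ with $\sum_j p_j=1$ and $\sum_j p_j e^{\mathbbm{i}\theta_j}=0$.

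Next I would propose the explicit entangled probe
\begin{equation}
\ket{\Psi}_{AB}=\sum_{j=1}^{d}\sqrt{p_j}\,\ket{\psi_j}_A\otimes\ket{f_j}_B,\nonumber
\end{equation}
where $\{\ket{f_j}\}$ is any orthonormal set on the ancilla $B$ (of dimension $d$ by Lemma~\ref{suff_prob}). A direct computation, using orthonormality of the $\ket{\psi_j}$'s and of the $\ket{f_j}$'s, gives
\begin{equation}
\bra{\Psi}(V\otimes\I)\ket{\Psi}=\sum_{j}p_j\,\bra{\psi_j}V\ket{\psi_j}=\sum_j p_j e^{\mathbbm{i}\theta_j}=0.\nonumber
\end{equation}
Plugging this into the Hellstrom formula of equation~\eqref{D_E} yields $\mathcal{D}_E=1$, i.e.\ perfect distinguishability with $\ket{\Psi}_{AB}$ as probe.

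It remains to verify that $\ket{\Psi}_{AB}$ is genuinely entangled, i.e.\ that at least two of the $p_j$'s are strictly positive. This is the only real obstacle, but it is essentially free: since every eigenvalue $e^{\mathbbm{i}\theta_j}$ has unit modulus, no single-point convex combination can equal zero, so the representation $0=\sum_j p_j e^{\mathbbm{i}\theta_j}$ forces at least two nonzero weights. The Schmidt decomposition of $\ket{\Psi}_{AB}$ then has Schmidt rank $\geqslant 2$, confirming entanglement. Combining these three steps proves the theorem, and as a bonus it provides an explicit recipe (via the convex-hull weights $p_j$) for constructing the entangled probe from the product one.
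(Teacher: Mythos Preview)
Your proposal is correct and follows essentially the same approach as the paper: both proofs take the convex weights $p_j=|\alpha_j|^2$ that witness $0\in con\{e^{\mathbbm{i}\theta_j}\}$, use them as Schmidt coefficients of an entangled probe built on the eigenbasis $\{\ket{\psi_j}\}$ of $U_1^\dagger U_2$, and observe that the resulting overlap $\sum_j p_j e^{\mathbbm{i}\theta_j}$ vanishes. Your version is slightly more careful in that you explicitly verify the Schmidt rank is at least $2$ (since no single unit-modulus eigenvalue can equal zero), a point the paper leaves implicit.
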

\begin{proof}
    Let us take two $d$-dimensional unitaries $U_1$ and $U_2$ (sampled from equal probability distribution) which are perfectly distinguishable. This implies that there exist $\{\alpha_j\}_{j=1}^d$ such that $|\sum_{j=1}^d|\alpha_j|^2 e^{i\theta_j}|=\min|con\{e^{i\theta_j}\}|=0$ (from \eqref{min_con}).

     If we start with an entangled probing state $\ket{\Psi}_{AB}=\sum_{l}\beta_{l}\ket{\psi_l}_A\ket{\psi_l}_B$,($\sum_l |\beta_l|^2=1$), the distinguishability of two unitaries will be,
     \bea\label{de_u1_u2_ent}
    &&\mathcal{D}_E\left[\{U_x\}_{x=1}^2,\{1/2,1/2\}\right]\nonumber\\
    &=& \frac12 \left(1 +\right.\nonumber\\
    &&\left. \sqrt{1-|\sum_{l}|\beta_{l}|^2\langle \psi_l\psi_l|(U_1^\dagger\otimes \I)(U_2\otimes\I) |\psi_l\psi_l\rangle|^2}\right)\nonumber\\
    &=& \frac12 \left(1+\sqrt{1-|\underbrace{\sum_{l}|\beta_{l}|^2\langle \psi_l|U_1^\dagger U_2 |\psi_l\rangle}_{I}|^2 }\right).
    \eea  

    We know $U_1^\dagger U_2=\sum_{j=1}^d e^{i\theta_j}\ket{\psi_j}\bra{\psi_j}$. We can rewrite the inner product term $I=|\sum_j|\beta_j|^2 e^{i\theta_j}|$. If we choose $\beta_j=\alpha_j, \forall j$, then $I=|\sum_j|\alpha_j|^2 e^{i\theta_j}|$, which is $\min|con\{e^{i\theta_j}\}|= 0$ from our initial claim. That completes the proof.

\end{proof}

One can think of the possibility of the inverse of the statement of Theorem \ref{thm_2} that is "if two unitaries are not distinguishable with product probing state, can the entangled probing state succeed to distinguish them?" The answer is negative and that can be inferred from the already proved result of \cite{GMauroD’Ariano_2002}. Now we move into a result regarding the hierarchy of entangled states as the probing state. 
\begin{thm}\label{max_nonmax}
    The necessary and sufficient conditions for two $d$-dimensional unitaries $U_1$ and $U_2$ to be not distinguishable with maximally entangled probing state but distinguishable with non-maximally probing state are following:\\
    $(i)$ $\tr(U_1^\dagger U_2)\neq 0$.\\
    $(ii)$ $\min|con\{e^{i\theta_j}\}|=0$, where $\{e^{i\theta_j}\}_{j=1}^d$ are the eigenvalues of $U_1^\dagger U_2$.
\end{thm}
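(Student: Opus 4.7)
The plan is to reduce both implications to an inner-product computation involving the reduced state $\sigma_A$ of the probe on the input subsystem. Writing an arbitrary entangled probe $\ket{\Psi}_{AB}$ in its Schmidt form and repeating the Hellstrom-style calculation of \eqref{de_u1_u2_ent}, one obtains
\[
\mathcal{D}_E = \tfrac{1}{2}\bigl[1+\sqrt{1-|\Tr(U_1^\dagger U_2\,\sigma_A)|^2}\bigr].
\]
I would then diagonalize $U_1^\dagger U_2 = \sum_j e^{\mathbbm{i}\theta_j}\ket{\psi_j}\bra{\psi_j}$ and set $p_j := \bra{\psi_j}\sigma_A\ket{\psi_j}$, so that $\Tr(U_1^\dagger U_2\,\sigma_A) = \sum_j p_j e^{\mathbbm{i}\theta_j}$, with $\{p_j\}$ automatically a probability distribution since $\sigma_A$ is a density matrix. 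This recasts the theorem as a purely geometric statement about which convex combinations of $\{e^{\mathbbm{i}\theta_j}\}$ are achievable by reduced states of maximally versus non-maximally entangled probes.

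For sufficiency, I would assume (i) and (ii). From (ii) pick a distribution $\{p_j^\ast\}$ with $\sum_j p_j^\ast e^{\mathbbm{i}\theta_j} = 0$; condition (i) guarantees this distribution cannot be the uniform one, since the uniform choice would yield $\Tr(U_1^\dagger U_2)/d \neq 0$. I would then exhibit the Schmidt-form probe $\ket{\Psi}_{AB} = \sum_j \sqrt{p_j^\ast}\,\ket{\psi_j}_A\ket{\psi_j}_B$, which is genuinely non-maximally entangled and whose reduced state satisfies $\Tr(U_1^\dagger U_2\,\sigma_A)=0$, delivering perfect distinguishability with a non-ME probe.

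For necessity, Theorem \ref{th4} immediately yields (i): the failure of any (equivalently, every) maximally entangled probe to achieve perfect distinguishability forces $\Tr(U_1^\dagger U_2)\neq 0$. Condition (ii) follows because any non-ME probe achieving perfect distinguishability produces a $\sigma_A$ with $\sum_j p_j e^{\mathbbm{i}\theta_j}=0$, placing the origin inside $\mathrm{con}\{e^{\mathbbm{i}\theta_j}\}$. The main subtlety I anticipate is that the weights $p_j$ are diagonal entries of $\sigma_A$ in the eigenbasis of $U_1^\dagger U_2$ and need not coincide with the Schmidt coefficients of $\ket{\Psi}_{AB}$, so one cannot naively identify "non-ME probe" with "non-uniform $\{p_j\}$"; the resolution is that for necessity it only matters that $\{p_j\}$ is a legitimate probability distribution, which is exactly what (ii) demands, while for sufficiency one directly constructs the Schmidt probe so that $\{p_j^\ast\}$ becomes the Schmidt spectrum, with (i) ensuring that spectrum is non-uniform. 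With this observation, both directions close cleanly.
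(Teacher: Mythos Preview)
Your proposal is correct and follows essentially the same approach as the paper: both identify the Hellstrom inner product with a convex combination of the eigenvalues $\{e^{\mathbbm{i}\theta_j}\}$ and construct the same Schmidt-form probe $\sum_j\sqrt{p_j^\ast}\,\ket{\psi_j}\ket{\psi_j}$ from the optimizing weights. Your version is slightly tighter in that you work directly with the reduced state $\sigma_A$ rather than routing through the product-probe result and Theorem~\ref{thm_2}, and you make explicit (via condition~(i)) why the constructed probe is genuinely non-maximally entangled, a point the paper's proof leaves implicit.
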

\begin{proof}
 \eqref{D_E_maximally_entangled} gives the expression of distinguishability of two unitaries by using a maximally entangled state as the probing state.
For $\mathcal{D}_E[\{U_1,U_2\},\{1/2,1/2\}]$ not to be $1$, $\tr(U_1^\dagger U_2)$ must not be zero. 

From \eqref{min_con}, we can infer that two unitaries will be distinguishable by product system probe if $\min|con\{e^{i\theta_j}\}|=0$.  In this situation, Theorem \ref{thm_2} asserts that there exists at least one entangled state which can do the same task of perfect distinguishability.
\end{proof}
 
To cite an example, consider two $d$-dimensional $(d\geqslant 3)$ unitaries such that
\bea\label{nonmax_uni}
\tilde{U}_1&=&\sum_{j=1}^d\ket{\Psi_j}\bra{j}, \quad 
\tilde{U}_2=\sum_{j=1}^d\ket{\Psi^{'}_j}\bra{j},\nonumber\\
\text{with} &&\ket{\Psi_1}=\ket{\Psi^{'}_2} \text{  and  } \ket{\Psi_2}=\ket{\Psi^{'}_1}.
\eea

To distinguish this two unitaries non-maximally entangled state acts as a better probing state than maximally entangled state.

If we start with a non-maximally entangled state $\frac{1}{\sqrt{2}}\left(\ket{11}+\ket{22}\right)$, distinguishability of two unitaries described above will be the distinguishability of two following states: $\frac{1}{\sqrt{2}}\left(\ket{\Psi_1 1}+\ket{\Psi_2 2}\right)$ and $\frac{1}{\sqrt{2}}\left(\ket{\Psi^{'}_1 1}+\ket{\Psi^{'}_2 2}\right)$. These two states are orthogonal from the description of the unitaries at \eqref{nonmax_uni}.\\

Now if we take input state a maximally entangled one, i.e., $\frac{1}{\sqrt{d}}\sum_{j=1}^d\ket{jj}$, the evolved states will be $\sum_{j=1}^d\ket{\Psi_j}\ket{j}$ and $\sum_{j=1}^d\ket{\Psi^{'}_j}\ket{j}$. Using the Hellstrom's formula, we find the distinguishability of these two evolved states is $\frac12+\frac{\sqrt{d-1}}{d}$.
  At this point, a straight-forward Theorem can be concluded for the sets of distinguishable qubit unitaries from Theorem \ref{max_nonmax} and Theorem \ref{th4}.  
 \begin{thm}\label{set_u}
     Suppose there are $n$ number of sets $\{\mathbbm{S}_i\}_{i=1}^n$ and each set $\mathbbm{S}_i$ consists of distinguishable qubit unitaries. Every set is always distinguishable with a common maximally entangled probe.
 \end{thm}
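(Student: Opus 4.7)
The plan is to argue that for qubit unitaries the pair-distinguishability criterion collapses to $\tr(U_a^\dagger U_b) = 0$, after which Theorem \ref{th4} lets me fix a single common maximally entangled probe.

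First I would invoke Theorem \ref{max_nonmax} with $d = 2$. Writing $e^{i\theta_1}, e^{i\theta_2}$ for the two eigenvalues of $U_a^\dagger U_b$, the condition $\min|con\{e^{i\theta_j}\}| = 0$ says the line segment between $e^{i\theta_1}$ and $e^{i\theta_2}$ contains the origin, which for two points on the unit circle forces $e^{i\theta_2} = -e^{i\theta_1}$ and hence $\tr(U_a^\dagger U_b) = e^{i\theta_1} + e^{i\theta_2} = 0$. This directly violates condition (i) of Theorem \ref{max_nonmax}, so the ``non-maximally yes, maximally no'' regime is empty for qubits. Combined with the fact (noted after Theorem \ref{thm_2}) that a pair indistinguishable with product states is also indistinguishable with entangled states, this forces every distinguishable qubit pair $(U_a, U_b)$ to satisfy $\tr(U_a^\dagger U_b) = 0$.

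Next, the identity $\langle\phi|(U_a^\dagger U_b \otimes \I)|\phi\rangle = \frac{1}{d}\tr(U_a^\dagger U_b)$ used in the proof of Theorem \ref{th4} holds for every maximally entangled $\ket{\phi}$, so $\tr(U_a^\dagger U_b)=0$ immediately makes $(U_a\otimes\I)\ket{\phi}$ and $(U_b\otimes\I)\ket{\phi}$ orthogonal for every $\ket{\phi}$. Fixing once and for all $\ket{\phi^+} = \frac{1}{\sqrt{2}}(\ket{00}+\ket{11})$ and applying this pairwise inside each $\mathbbm{S}_i$, the evolved family $\{(U\otimes\I)\ket{\phi^+}\}_{U \in \mathbbm{S}_i}$ is mutually orthogonal, so the projective measurement onto its span perfectly distinguishes $\mathbbm{S}_i$. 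Since the same $\ket{\phi^+}$ works for every $i$, it is the promised common maximally entangled probe.

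I don't expect a real obstacle, since the genuine content is already in Theorems \ref{max_nonmax} and \ref{th4}. The only step that requires a line of care is translating ``the set $\mathbbm{S}_i$ is distinguishable'' into ``the pairs $(U_a, U_b)$ in $\mathbbm{S}_i$ satisfy $\tr(U_a^\dagger U_b) = 0$,'' which is immediate because perfect discrimination of finitely many pure evolved states requires pairwise orthogonality.
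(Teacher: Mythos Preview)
Your proposal is correct and shares the paper's core observation: for qubits, perfect pair distinguishability is equivalent to $\tr(U_a^\dagger U_b)=0$, which by the trace identity in Theorem~\ref{th4} is detected by any maximally entangled probe. The route diverges for sets $\mathbbm{S}_i$ with more than two elements. The paper argues by exclusion---a product or non-maximally entangled qubit probe cannot yield three or more mutually orthogonal evolved states, so if $\mathbbm{S}_i$ is distinguishable at all the probe must already be maximally entangled. You instead reduce ``$\mathbbm{S}_i$ distinguishable'' to ``every pair in $\mathbbm{S}_i$ has $\tr(U_a^\dagger U_b)=0$'' via pairwise orthogonality of the pure evolved states, and then verify directly that $\ket{\phi^+}$ makes the whole evolved family orthonormal. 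Your route is a bit more self-contained: it sidesteps the paper's informal claim about non-maximally entangled qubit probes (which is true but not argued in the text), and it does not need a separate appeal to Theorem~\ref{th4} for the multi-unitary case, since orthogonality under $\ket{\phi^+}$ follows immediately from the trace identity.
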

 \begin{proof}
      The condition $\tr(U_1^\dagger U_2)=0$ and $\min|con\{e^{i\theta_j}\}|=0$ are equivalent for two unitaries acting on $\mathbbm{C}^2$, since $U_1^\dagger U_2$ has two eigenvalues. Thus if two unitaries are distinguishable, then they must be distinguishable with maximally entangled probe. From Theorem \ref{th4}, we know all the maximally entangled state is equivalent for distinguishing two unitaries. If number of unitaries exceeds than two, then maximally entangled state is only option trivially for perfect distinguishability as we always find at most one orthogonal state with respect to a qubit-qubit non-maximally entangled state and same goes for qubit product state also. So any maximally entangled state can work as a common probe for which the unitaries of the each set can be distinguishable.
 \end{proof}
 If we increase the dimension of the unitaries to $3$, then the Theorem \ref{set_u} does not hold. Next theorem is explaining that fact.\\
 \begin{thm}
     Suppose there are two sets ($\mathbbm{S'}_1,\mathbbm{S'}_2$) of qutrit distinguishable unitaries. $\{T_1, T_2\}\in\mathbbm{S'}_1$ and $\{T_1, T_3\}\in\mathbbm{S'}_2$ where
     \bea
     T_1 &=& \ket{1}\bra{1}+\ket{2}\bra{2}+\ket{3}\bra{3},\nonumber\\
     T_2 &=& \ket{1}\bra{1}+e^{\mathbbm{i}2\pi/3}\ket{2}\bra{2}+e^{\mathbbm{i}4\pi/3}\ket{3}\bra{3},\nonumber\\
     T_3 &=& \ket{1}\bra{1}+\ket{2}\bra{2}-\ket{3}\bra{3}.
     \eea
    There does not exist any common probe which can distinguish the unitaries of both the sets $\mathbbm{S'}_1$ and $\mathbbm{S'}_2$. 
 \end{thm}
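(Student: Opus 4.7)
The plan is to show that perfect single-shot distinguishability of $\mathbbm{S'}_1$ and $\mathbbm{S'}_2$ by one shared probe is infeasible because the two tasks impose mutually incompatible linear constraints on the probe's reduced-state diagonal. By Lemma~\ref{suff_prob} and the convexity of the trace norm (which, as in the argument preceding \eqref{D_S1}, ensures that any mixed probe can be replaced by a pure component in its decomposition that is at least as good), I restrict attention to pure probes $\ket{\psi}\in\mathbbm{C}^3\otimes\mathbbm{C}^3$. Applying Hellstrom's formula to the evolved states $(T_1\otimes\I)\ket{\psi}$ and $(T_i\otimes\I)\ket{\psi}$, perfect distinguishability of $\{T_1,T_i\}$ by $\ket{\psi}$ is equivalent to
\[
\bra{\psi}(T_1^\dagger T_i\otimes\I)\ket{\psi}=0,\qquad i\in\{2,3\}.
\]

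Since $T_1=\I$ and both $T_2$ and $T_3$ are diagonal in the computational basis, each expression depends only on the marginals $p_j:=\bra{\psi}(\ket{j}\bra{j}\otimes\I)\ket{\psi}$, i.e.\ the diagonal entries of $\rho_A=\tr_B\ket{\psi}\bra{\psi}$, with $p_j\geqslant 0$ and $p_1+p_2+p_3=1$. The condition for $\{T_1,T_2\}$ becomes $p_1+p_2\,e^{\mathbbm{i}2\pi/3}+p_3\,e^{\mathbbm{i}4\pi/3}=0$; splitting into real and imaginary parts forces $p_2=p_3$ and $p_1=(p_2+p_3)/2$, so combined with normalization the unique solution is $p_1=p_2=p_3=1/3$. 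The condition for $\{T_1,T_3\}$ becomes $p_1+p_2-p_3=0$, which combined with normalization yields $p_3=1/2$. These two systems contradict each other ($1/3\neq 1/2$), so no pure probe — and hence no probe at all — can simultaneously perfectly distinguish both sets.

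The step needing most care is the reduction to pure probes of ancilla dimension $3$; this is dispensed with by Lemma~\ref{suff_prob} together with the standard support-orthogonality observation that perfect distinguishability of two mixed evolved states forces perfect distinguishability on every pure component of the probe. After that reduction, the proof's leverage comes from a structural coincidence rather than a hard estimate: $T_1=\I$ and $T_2,T_3$ share the computational eigenbasis, so both perfect-distinguishability conditions become affine constraints on the same probability simplex, and their joint infeasibility can be read off by inspection with no optimization required.
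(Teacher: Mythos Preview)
Your proof is correct and follows essentially the same route as the paper: both arguments reduce perfect distinguishability of each pair to a vanishing inner product $\bra{\psi}(T_1^\dagger T_i\otimes\I)\ket{\psi}$, observe that since $T_2,T_3$ are diagonal in the computational basis this depends only on the probability weights $p_j$ (your notation; the paper's $R_j$), and then check that the simplex constraint from $\mathbbm{S}'_1$ forces $p_1=p_2=p_3=1/3$ while the one from $\mathbbm{S}'_2$ forces $p_3=1/2$. Your treatment is slightly more explicit than the paper's in justifying the restriction to pure probes (via support orthogonality of perfectly distinguishable mixtures) and in identifying the $p_j$ as diagonal entries of the reduced state, but the core argument is identical.
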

 \begin{proof}

 Let us take a common probe $\ket{\zeta_{BB'}}=\sum_{k=1}^3\lambda_k\ket{\tau_k}\ket{k}$. It is easy to check that the eigenvectors of $T_1^\dagger T_2$ and $T_1^\dagger T_3$ are same $(\ket{1},\ket{2},\ket{3})$. We can write $\ket{\tau_k}=\sum_{j=1}^3 c_j^k\ket{j}$ and we can calculate 
 \bea
 IP_1 &=& \bra{\zeta_{BB'}}(T_1^\dagger T_2\otimes\I)\ket{\zeta_{BB'}}\nonumber\\
 &=& \sum_{k=1}^3|\lambda_k|^2\bra{\tau_k}T_1^\dagger T_2\ket{\tau_k}\nonumber\\
 &=& 1(R_1)+e^{\mathbbm{i}2\pi/3}(R_2)+e^{\mathbbm{i}4\pi/3}(R_3),\nonumber\\
 \text{where  } R_1 &=& \sum_{k=1}^3|\lambda_k|^2|c_1^k|^2,
 R_2 = \sum_{k=1}^3|\lambda_k|^2|c_2^k|^2, \nonumber\\
 R_3 &=& \sum_{k=1}^3|\lambda_k|^2|c_3^k|^2.
 \eea
  As $R_1+R_2+R_3 = 1$, $IP_1$ is a convex combination of $\{1,e^{\mathbbm{i}2\pi/3},e^{\mathbbm{i}4\pi/3}\}$. As these three points make a simplex structure, only one unique convex combination makes $IP_1$ zero ($IP_1=0$ means perfect distinguishability of $T_1$ and $T_2$) and that is $R_1=R_2=R_3=1/3$. We can do the similar calculation for $IP_2$.
 \bea
 IP_2 &=& \bra{\zeta_{BB'}}(T_1^\dagger T_3\otimes\I)\ket{\zeta_{BB'}}\nonumber\\
 &=&1(R_1)+1(R_2)-1(R_3). 
 \eea
 From the previous case, if $R_1=R_2=R_3=1/3$, this $IP_2\neq 0$. From Theorem \ref{thm_2}, we know if two unitaries are distinguishable with single system probe, there must exists an entangled probe which can perfectly distinguish these two unitaries. In this example, there is no common entangled probe which can distinguish both the sets of unitaries. On the other way, it can also be concluded that there does not exist any common single probe also to do the same task.     
 \end{proof}

 Now we will present our two main results by giving a general set of unitary operations in dimension $d\geqslant 3$. Both the theorems describe the disadvantage of maximally entangled state as a probing state. 

\begin{thm}\label{thm3}
    Consider $d$, $d$-dimensional unitaries $\{V_l=
    \sum_{j=1}^d\ket{\psi^{(l)}_j}\bra{j}\}_{l=1}^d$ and $\ket{\psi^{(1)}_j}=\ket{\psi^{(j)}_1}$ and $\ket{\psi^{(l)}_j}=\ket{\psi^{(l')}_j}$ for $j\neq l,l'$ and  $d\geqslant3$. The unitaries are distinguishable with product state and non-maximally entangled state but not distinguishable with maximally entangled state. 
\end{thm}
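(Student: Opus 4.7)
The plan is to first unpack the two column-matching conditions into an explicit form for each $V_l$, absorb the global freedom in $V_1$ by a local unitary on system $A$, fix the remaining phase, and then verify the three claims in turn. Setting $|e_j\rangle := |\psi^{(1)}_j\rangle$, the first hypothesis gives $V_l|1\rangle = |e_l\rangle$ for every $l$ and the second (specialised to $l' = 1$) gives $V_l|j\rangle = |e_j\rangle$ for every $j \neq 1, l$; unitarity then forces $V_l|l\rangle$ to be proportional to $|e_1\rangle$. Because distinguishability is invariant under the replacement $V_l \mapsto V_1^\dagger V_l$, I take $V_1 = \I$, $|e_j\rangle = |j\rangle$, and fix the remaining phase so that $V_l|l\rangle = |1\rangle$; then each $V_l$ ($l \geq 2$) is exactly the transposition swapping $|1\rangle$ and $|l\rangle$.

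Claims (a) and (c) now follow immediately. For (a), the product probe $|1\rangle_A \otimes |\chi\rangle_B$ gives evolved states $|l\rangle \otimes |\chi\rangle$, pairwise orthogonal. For (c), by the argument of Theorem~\ref{th4} the pairwise overlap of the evolved states under any maximally entangled probe equals $\frac{1}{d}\Tr(V_l^\dagger V_{l'})$; for $l = 1$, $l' \geq 2$ this evaluates to $(d-2)/d \neq 0$ whenever $d \geq 3$, so no maximally entangled probe renders all the evolved states orthogonal.

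The substantive step is (b): I must produce a non-maximally entangled $|\Psi\rangle_{AB}$ whose marginal $\rho_A$ satisfies $\Tr(V_l^\dagger V_{l'}\rho_A) = 0$ for every $l \neq l'$ (the orthonormal Schmidt basis on $B$ kills the cross terms and reduces pairwise orthogonality of the evolved bipartite states to this trace condition on $\rho_A$). Exploiting the permutation symmetry of $\{V_l\}_{l \geq 2}$ under relabeling of $\{2,\ldots,d\}$, I employ the matching ansatz $\rho_{11} = 1 - (d-1)y$, $\rho_{jj} = y$ and $\rho_{1j} = -u$ for $j \geq 2$, and $\rho_{jk} = w$ for distinct $j,k \geq 2$. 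Using that $V_l$ is the transposition $(1,l)$ and $V_l V_{l'}$ is the $3$-cycle on $\{1, l, l'\}$, all the trace conditions collapse to the two equations $-2u + (d-2)y = 0$ and $-2u + w + (d-3)y = 0$, uniquely solved by $u = (d-2)y/2$ and $w = y$.

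The main obstacle is then verifying positivity of this $\rho_A$ and that its rank is exactly $2$ (rank $1$ would correspond to a mere product probe). The ansatz is supported on the two-dimensional subspace $\mathrm{span}\{|1\rangle, \sum_{j \geq 2}|j\rangle\}$, where it reduces to a $2 \times 2$ block with determinant $(d-1)y(1 - d^2 y/4)$; this is strictly positive throughout the window $y \in (0, 4/d^2)$, and for any such $y$ the density matrix $\rho_A$ has rank $2$ with two unequal nonzero eigenvalues. Purifying such a $\rho_A$ yields the required non-maximally entangled probe, completing the argument.
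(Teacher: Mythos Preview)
Your argument matches the paper's for parts (a) and (c) exactly: same product probe $|1\rangle\otimes|\chi\rangle$, same trace computation $\Tr(V_1^\dagger V_2)=d-2$. For part (b) the paper writes down an explicit bipartite probe $|\phi_{pr}\rangle=\sum_t\bigl(a_t|1\rangle+b_t\sum_{s\ge2}|s\rangle\bigr)|t\rangle$ and computes the pairwise overlap directly, whereas you work at the level of the reduced state, solve the trace conditions $\Tr(V_l^\dagger V_{l'}\rho_A)=0$ with a symmetric ansatz, and then purify. The two constructions coincide---both are rank-$2$ states supported on $\mathrm{span}\{|1\rangle,\sum_{j\ge2}|j\rangle\}$---but your route has the merit of making the positivity and ``genuinely entangled'' (rank $>1$) checks explicit, which the paper omits.

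One caution: the step ``fix the remaining phase so that $V_l|l\rangle=|1\rangle$'' is not an invariance of the discrimination problem (a global phase on $V_l$ rescales every column, not just the $l$-th), so in principle the hypotheses allow $V_l|l\rangle=e^{i\alpha_l}|e_1\rangle$ with $l$-dependent $\alpha_l$. However, the paper's own overlap formula $\sum_t\bigl[2\,\mathrm{Re}(a_t^*b_t)+(d-2)|b_t|^2\bigr]$ is likewise only valid when all these phases agree, so your assumption is consistent with the paper's intended reading of the hypothesis (as reflected in its worked example).
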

\begin{proof}
 At first, we will consider the distinguishability of the unitaries with product state. If the probing state is $\ket{\chi}_{max}=\ket{1}\ket{\overline{\chi}}$, the set of transformed states will be $\{\ket{\psi^{(l)}_1}\ket{\overline{\chi}}\}_{l=1}^d$ which is same as $\{\ket{\psi^{(1)}_j}\ket{\overline{\chi}}\}_{j=1}^d$ and this is a set of orthogonal states. So the unitaries are distinguishable with product system probe.

  For non-maximally entangled state, let us consider a Scmidt rank-$2$ entangled state $\ket{\phi_{pr}}=\sum_{t=1}^d(a_t\ket{1}+b_t\sum_{s=2}^d\ket{s})\ket{t}$ with $a_t\neq b_t$. Note that, $\ket{\phi_{pr}}$ may not be in Schmidt-decomposed form. With this probe, the distinguishability of two unitaries $V_l$ and $V_{l'}$ ( for any $l\neq l'$) is,
 \bea
 &&\mathcal{D}(V_l,V_{l'})\nonumber\\
 &=& \frac12\left[1+\sqrt{1-|\bra{\phi_{pr}}(V_l^\dagger V_{l'}\otimes\I)\ket{\phi_{pr}}|^2}\right]\nonumber\\
 &=&  \frac12\left[1+\sqrt{1-|\sum_{t=1}^d 2 Re(a^*_t b_t)+(d-2)|b_t|^2|^2}\right].\nonumber\\
 \eea
  We can always find suitable $a_t$ and $b_t$ to make $\sum_{t=1}^d 2 Re(a^*_t b_t)+(d-2)|b_t|^2=0$, which means each pair of unitaries are distinguishable with same probe. 

 Now we will check about the maximally entangled probe. We consider the first two unitaries which are denoted as $\{V_l = \sum_{j=1}^d\ket{\psi^{(l)}_j}\bra{j}\}_{l=1}^2$. From Theorem \ref{th4}, the distinguishability with maximally entangled probe depends on $\tr(V_1^\dagger V_2)$. One can check $\tr(V_1^\dagger V_2)= d-2$, which is non-zero as we take $d\geqslant 3$. From \eqref{D_E_maximally_entangled}, we can infer that maximally entangled state can not distinguish the set of unitaries perfectly. That completes the proof. 
\end{proof}

    Let us give an example of $d$, $d$-dimensional unitaries as following which obey Theorem \ref{thm3} :\\
\begin{widetext}
    \bea\label{eq12}
    \Tilde{V}_1 &=& \ket{1}\bra{1}+\ket{2}\bra{2}+\ket{3}\bra{3}+\ket{4}\bra{4}+\ket{5}\bra{5}+\cdots+\ket{d-1}\bra{d-1}+\ket{d}\bra{d}\nonumber\\
     \Tilde{V}_2 &=& \ket{2}\bra{1}+\ket{1}\bra{2}+\ket{3}\bra{3}+\ket{4}\bra{4}+\ket{5}\bra{5}+\cdots+\ket{d-1}\bra{d-1}+\ket{d}\bra{d}\nonumber\\
     \Tilde{V}_3 &=& \ket{3}\bra{1}+\ket{2}\bra{2}+\ket{1}\bra{3}+\ket{4}\bra{4}+\ket{5}\bra{5}+\cdots+\ket{d-1}\bra{d-1}+\ket{d}\bra{d}\nonumber\\
     \Tilde{V}_4 &=& \ket{4}\bra{1}+\ket{2}\bra{2}+\ket{3}\bra{3}+\ket{1}\bra{4}+\ket{5}\bra{5}+\cdots+\ket{d-1}\bra{d-1}+\ket{d}\bra{d}\nonumber\\
    & \vdots &
 \hspace{50pt} \vdots  \hspace{50pt} \vdots  \hspace{50pt} \vdots  \hspace{50pt} \vdots
 \hspace{50pt} \vdots \nonumber\\
     \Tilde{V}_{d-1} &=& \ket{d-1}\bra{1}+\ket{2}\bra{2}+\ket{3}\bra{3}+\ket{4}\bra{4}+\ket{5}\bra{5}+\cdots+\ket{1}\bra{d-1}+\ket{d}\bra{d}\nonumber\\
     \Tilde{V}_d &=& \ket{d}\bra{1}+\ket{2}\bra{2}+\ket{3}\bra{3}+\ket{4}\bra{4}+\ket{5}\bra{5}+\cdots+\ket{d-1}\bra{d-1}+\ket{1}\bra{d}
    \eea
\end{widetext}
 In the table \ref{mytable}, we present the values of distinguishability of the unitaries decribed at \eqref{eq12} with product system $(\mathcal{\Tilde{D}}_P)$, with non-maximally entangled system $(\mathcal{\Tilde{D}}_{NME})$  and with maximally entangled probe $(\mathcal{\Tilde{D}}_{ME})$ for dimension ($d$)$=\{2,\cdots,7\}$ using semi-definite programming. Note that the dimension ($d$) of the unitaries is the same as the number of the unitaries.

\begin{table}[h!]
\centering
\begin{tabular}{|c|c|c|c|c|c|c|} 
 \hline
 $d$ & 2 & 3 & 4 & 5 & 6 & 7   \\ [0.6ex] 
 \hline
 $\mathcal{\Tilde{D}}_P$ & 1 & 1 & 1 & 1 & 1 & 1 \\ [0.6ex]
 \hline
 $\mathcal{\Tilde{D}}_{NME}$ & 1 & 1 & 1 & 1 & 1 & 1 \\ [0.6ex]
 \hline
 $\mathcal{\Tilde{D}}_{ME}$ & 1 & 0.9605 & 0.8980 & 0.8285 & 0.7616 & 0.7008 \\ [0.6ex]
 \hline
\end{tabular}
\caption{ Values of $\mathcal{\Tilde{D}}_P$ and  $\mathcal{\Tilde{D}}_{NME}$ are $1$ for all $d$ but the value of $\mathcal{\Tilde{D}}_{ME}$ is decreasing with increasing $d$.}
\label{mytable}
\end{table}

\begin{thm}\label{th_6}
    Consider $2d, d$-dimensional unitaries $\{W_k =\sum_{i=1}^d\ket{\phi_{i+k-1}}\bra{i}, W_{d+k}= -\ket{\phi_{k}}\bra{1}+\sum_{i=2}^d\ket{\phi_{i+k-1}}\bra{i}\}_{k=1}^d$ with $\ket{\phi_{d+x}}=\ket{\phi_x}$ and $ d\geqslant 3$. These unitaries are not distinguishable with both the product state and maximally entangled state but distinguishable with non-maximally entangled probe .
\end{thm}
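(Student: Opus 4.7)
The plan is to verify each of the three assertions separately, using the structural observation that since every $W_k$ is unitary and maps the standard basis to $\{|\phi_j\rangle\}$ (up to a cyclic shift and an occasional sign), the set $\{|\phi_j\rangle\}_{j=1}^d$ must itself be an orthonormal basis of $\mathbb{C}^d$. This orthonormality will be used repeatedly.

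For the product-probe part, I would note that an input $|\psi\rangle_A\otimes|\chi\rangle_B$ produces evolved states of the form $(W_l|\psi\rangle)\otimes|\chi\rangle$, so their mutual orthogonality reduces to that of $\{W_l|\psi\rangle\}_{l=1}^{2d}$ in $\mathbb{C}^d$. Since $2d>d$, these $2d$ vectors cannot all be mutually orthogonal, so perfect distinguishability with a product probe is already ruled out on dimensional grounds. For the maximally entangled case, I would invoke Theorem \ref{th4} and focus on a single pair $\{W_k,W_{d+k}\}$. A direct calculation gives $W_{d+k}^\dagger W_k = I - 2|1\rangle\langle 1|$, so $\mathrm{Tr}(W_{d+k}^\dagger W_k)=d-2$, which is nonzero for $d\geqslant 3$. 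By \eqref{D_E_maximally_entangled} this single pair is not perfectly distinguishable with any maximally entangled probe, so the full set cannot be either.

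For the positive direction, I would propose the Schmidt-rank-2 probe
\begin{equation}
|\Psi\rangle_{AB} \;=\; \tfrac{1}{\sqrt{2}}\bigl(|1\rangle|1\rangle+|2\rangle|2\rangle\bigr),
\end{equation}
which is non-maximally entangled in $\mathbb{C}^d\otimes\mathbb{C}^d$ whenever $d\geqslant 3$. Applying the unitaries gives
\begin{equation}
(W_k\otimes I)|\Psi\rangle=\tfrac{1}{\sqrt{2}}\bigl(|\phi_k\rangle|1\rangle+|\phi_{k+1}\rangle|2\rangle\bigr),
\end{equation}
\begin{equation}
(W_{d+k}\otimes I)|\Psi\rangle=\tfrac{1}{\sqrt{2}}\bigl(-|\phi_k\rangle|1\rangle+|\phi_{k+1}\rangle|2\rangle\bigr),
\end{equation}
and I would verify orthogonality of these $2d$ states by case analysis. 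For a fixed $k$, the pair $W_k|\Psi\rangle$ and $W_{d+k}|\Psi\rangle$ is orthogonal because the $|1\rangle_B$-component contributes $-\tfrac12$ and the $|2\rangle_B$-component contributes $+\tfrac12$, which cancel; this is precisely where the sign flip in $W_{d+k}$ is exploited. For $k\neq k'$, every cross term vanishes because $\langle\phi_k|\phi_{k'}\rangle=\langle\phi_{k+1}|\phi_{k'+1}\rangle=0$ by the cyclic index convention $|\phi_{d+x}\rangle=|\phi_x\rangle$.

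The main obstacle is not any single computation — each step above is short — but rather the right guess of the probe. The key observation driving the choice is that the pair $\{W_k,W_{d+k}\}$ differs only on $|1\rangle_A$, so the probe's $A$-support must include $|1\rangle$ with enough weight to register the sign flip; but if the probe were maximally entangled, the uniform weights force the trace condition $\mathrm{Tr}(W_{d+k}^\dagger W_k)=d-2\neq 0$ to obstruct orthogonality. Restricting the $A$-support to just $\{|1\rangle,|2\rangle\}$ turns out to be the minimal nontrivial choice that balances both constraints, and it automatically embeds the evolved states into a $2d$-dimensional subspace where $2d$ orthogonal vectors fit comfortably.
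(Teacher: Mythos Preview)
Your proposal is correct and follows essentially the paper's route: the same dimension-counting argument for the product probe, the same trace computation $\Tr(W_k^\dagger W_{d+k})=d-2$ for the maximally entangled case, and an explicit non-maximally entangled probe for the positive part. Your probe $\tfrac{1}{\sqrt2}(|11\rangle+|22\rangle)$ is the Schmidt-rank-$2$ instance of the paper's slightly more general family $\sum_i\epsilon_i|i\rangle|i\rangle$ with $|\epsilon_1|^2=\tfrac12=\sum_{i\ge2}|\epsilon_i|^2$, which the paper also singles out as the minimal-rank choice.
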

\begin{proof}
    Probing with a product state is equivalent to probing with a single system as the auxiliary system does not contribute in the inner product of Hellstrom formula \eqref{D_S11}. So the product probe eventually transforms into $2d$ number of evolved states spanned in dimension $d$. So distinguishability is less than $1$ using the product state.

    For maximally entangled state, we take a pair of unitaries such as $W_1=\sum_{i=1}^d\ket{\phi_{i}}\bra{i}, W_{d+1}= -\ket{\phi_{1}}\bra{1}+\sum_{i=2}^d\ket{\phi_{i}}\bra{i}$ from the set. $\tr(W_1^\dagger W_2)=d-2$, which is non-zero for $d\geqslant 3$. Using Theorem \ref{th4}, we can say maximally entangled state cannot distinguish the set of unitaries perfectly.

    Now we probe with non-maximally entangled state $\ket{\psi_{nm}}=\sum_{i=1}^d\epsilon_i\ket{\phi_i}\ket{i}$ with the condition $-|\epsilon_1|^2+\sum_{i=2}^d|\epsilon_i|^2=0$. This condition, together with normalization, immediately gives $|\epsilon_1|=\frac{1}{\sqrt{2}}$ and it follows that $\sum_{i=2}^d|\epsilon_i|^2=\frac12$.  With this input state, the evolved states will be $\{\ket{\rho_k}=\sum_{i=1}^d\epsilon_i\ket{\phi_{i+k-1}}\ket{i},\ket{\rho_{d+k}}= -\epsilon_1\ket{\phi_{k}}\ket{1}+\sum_{i=2}^d\epsilon_i\ket{\phi_{i+k-1}}\ket{i}\}_{k=1}^d$. If we take inner product of any pair of states from this set, there will be three types of terms, such as:
    \bea
    \langle\rho_{k'}|\rho_k\rangle_{k\neq k'} &=& \sum_{i=1}^d|\epsilon_i|^2\langle \phi_{i+k'-1}|\phi_{i+k-1}\rangle\nonumber\\ &=& 0,\\
    \langle\rho_{d+k'}|\rho_{d+k}\rangle_{k\neq k'} &=& -|\epsilon_1|^2\langle\phi_{k'}|\phi_k\rangle\nonumber\\ &&+ \sum_{i=2}^d|\epsilon_i|^2\langle \phi_{i+k'-1}|\phi_{i+k-1}\rangle\nonumber\\
    &=& 0\\
    \langle\rho_{k'}|\rho_{d+k}\rangle &=& -|\epsilon_1|^2\langle\phi_{k'}|\phi_k\rangle\nonumber\\ &&+ \sum_{i=2}^d|\epsilon_i|^2\langle \phi_{i+k'-1}|\phi_{i+k-1}\rangle\nonumber\\
    &=& (-|\epsilon_1|^2+\sum_{i=2}^d|\epsilon_i|^2)\delta_{k,k'}\nonumber\\
    &=& 0.
    \eea
     The last inner product becomes zero from the initial condition imposed on $\ket{\psi_{nm}}$ when $k=k'$.
    So evolved states are mutually orthogonal. Thus the unitaries are distinguishable with non-maximally entangled probe. Note that, the highest Schmidt rank of $\ket{\psi_{nm}}$ can be $d$ and the lowest rank can be $2$. 
\end{proof}
Let us give an example of $6$ unitaries of dimension $3$ as following which abide by Theorem \ref{th_6}:\\
\bea
\Tilde{W}_1 &=& \ket{1}\bra{1}+\ket{2}\bra{2}+\ket{3}\bra{3}\nonumber\\
\Tilde{W}_2 &=& \ket{2}\bra{1}+\ket{3}\bra{2}+\ket{1}\bra{3}\nonumber\\
\Tilde{W}_3 &=& \ket{3}\bra{1}+\ket{1}\bra{2}+\ket{2}\bra{3}\nonumber\\
\Tilde{W}_4 &=& -\ket{1}\bra{1}+\ket{2}\bra{2}+\ket{3}\bra{3}\nonumber\\
\Tilde{W}_5 &=& -\ket{2}\bra{1}+\ket{3}\bra{2}+\ket{1}\bra{3}\nonumber\\
\Tilde{W}_6 &=& -\ket{3}\bra{1}+\ket{1}\bra{2}+\ket{2}\bra{3}
\eea
These six unitaries are distinguishable with initial probe $\frac{1}{\sqrt{2}}\ket{11}+c_1\ket{22}+c_2\ket{33}$ but not distinguishable with any input maximally entangled state or product state.

 In the table \ref{mytable1}, we present the values of distinguishability of the unitaries described at Theorem \ref{th_6} with product system $(\mathcal{\overline{D}}_P)$, with non-maximally entangled system $(\mathcal{\overline{D}}_{NME})$ and with maximally entangled system $(\mathcal{\overline{D}}_{ME})$ for dimension ($d$)$=\{2,\cdots,6\}$ using semi-definite programming.
\begin{table}[h!]
\centering
\begin{tabular}{|c|c|c|c|c|c|c|} 
 \hline
 $d$ & 2 & 3 & 4 & 5 & 6  \\ [0.6ex] 
 \hline
 $\mathcal{\overline{D}}_P$ & 1/2 & 1/2 & 1/2 & 1/2 & 1/2 \\ [0.6ex]
 \hline
 $\mathcal{\overline{D}}_{NME}$ & 1 & 1 & 1 & 1 & 1 \\ [0.6ex]
 \hline
 $\mathcal{\overline{D}}_{ME}$ & 1 & 0.9146 & 0.8163 & 0.7300 & 0.6570 \\ [0.6ex]
 \hline
\end{tabular}
\caption{ Values of $\mathcal{\overline{D}}_P$ is $1/2$ for all dimensions and  $\mathcal{\overline{D}}_{NME}$ are $1$ for all $d$ but the value of $\mathcal{\overline{D}}_{ME}$ is decreasing with increasing $d$.}
\label{mytable1}
\end{table}

\section{Conclusion}
In this paper, our principal result is to provide two sets of unitaries, where first set consists of the unitaries which are distinguishable with  non-maximally entangled state and product state  as the input but indistinguishable with  maximally entangled  input state and  the unitaries of the second set are distinguishable with non-maximally entangled state but not distinguishable with product state and maximally entangled state.
 Additionally we give a necessary and sufficient conditions of superiority of non-maximally entangled state over maximally entangled state as the probe to distinguish two unitaries.   We expect that our work encourages further exploration about the role of entanglement in the task of distinguishing physical processes. 

Other future works could explore a more general classes of unitaries to satisfy our core claim and eventually find some necessary and sufficient conditions which will characterize this phenomenon. Furthermore, one can investigate our results in the context of some general quantum channels. In our study, it has been shown that the input entangled states (maximally entangled or non-maximally entangled) cannot be worse than the input product states when we consider the perfect distinguishability of only two unitaries. It would be an interesting attempt to check this nature of entangled states for any two general qubit channels and this venture can lead to the more fundamental question of optimal input state to distinguish two or more channels. If not for any general channels, it will be very beneficial if we can comment about the input states for any special kind of channels, specially when these channels are not distinguishable perfectly. On top of that, one can think of the applications of the distinguishability of unitaries in the different kinds of information and computation theoretic tasks \cite{HUANG2022127863,PhysRevA.106.062429}.

\subsection*{Acknowledgements}
This work is supported by STARS (STARS/STARS-2/2023-0809), Govt.
of India.

\bibliography{ref} 

\end{document}